\documentclass[conference]{IEEEtran}

\usepackage{amsmath}
\usepackage{amssymb}
\usepackage{algorithm}
\usepackage{algorithmicx}
\usepackage{algpseudocode}
\usepackage{subfigure}
\usepackage{acronym}
\usepackage{todonotes}
\usepackage[draft]{hyperref}
\usepackage{tabularx}
\usepackage[keeplastbox]{flushend} 
\usepackage{hhline}
\usepackage{paralist} 
\usepackage{amsmath}
\usepackage{amsthm}

\newtheorem{proposition}{Proposition}
\makeatletter
\renewcommand{\ALG@beginalgorithmic}{\scriptsize}
\makeatother

\algtext*{EndLoop}
\algtext*{EndFor}
\algtext*{EndIf}
\algtext*{EndWhile}

\hyphenation{op-tical net-works semi-conduc-tor}
\newcommand{\name}{WiPLoc}
\newcommand{\mono}{Mobile Node}
\newcommand{\charbe}{Anchor Node}
\newcommand{\anbe}{WPA}

\acrodef{AC}{Alternating Current}
\acrodef{ESS}{energy storage system}
\acrodef{RF}{Radio Frequency}
\acrodef{RFID}{Radio Frequency Identification}
\acrodef{USB}{Universal Serial Bus}
\acrodef{WATS}{Wireless autonomous transducer systems}
\acrodef{WiFi}{wireless local area network}
\acrodef{WPT}{Wireless Power Transfer}
\acrodef{BLE}{Bluetooth Low Energy}
\acrodef{FEC}{Forward Error Correcting}
\acrodef{EIRP}{Effective Isotropic Radiated Power}
\acrodef{RSSI}{Received Signal Strength Indicator}
\acrodef{WFI}{Wait For Interrupt}
\acrodef{PRR}{Packet Reception Rate}

\begin{document}

\title{\name: Perpetual Indoor Localization with RF Wireless Power Transfer}
\author{
\IEEEauthorblockN{Qingzhi Liu, Wieger IJntema, Anass Drif, Przemys{\l}aw Pawe{\l}czak, and Marco Zuniga}
\IEEEauthorblockA{Department of Electrical Engineering, Mathematics and Computer Science\\ Delft University of Technology, Mekelweg 4, 2628 CD Delft, The Netherlands\\ Email: \{q.liu-1, p.pawelczak, m.a.zunigazamalloa\}@tudelft.nl, \{w.ijntema, a.drif\}@student.tudelft.nl}
%
}

\maketitle

\begin{abstract}
Indoor localization is a cornerstone of mobile services. Until now most of the research effort has focused on achieving sub-meter localization accuracy because many mobile applications depend on precise localization measurements. In some scenarios, however, it is required to trade location accuracy for system maintainability. 
For example, in large-scale deployments of indoor networks, such as item-monitoring in smart buildings, attaining room-level localization accuracy may be sufficient, but replacing the batteries of the devices used for localization could lead to high operational costs. As indoor localization systems grow in popularity it will be important to provide them with full energy autonomy. 
To tackle this problem we propose \name: an indoor localization system aimed at operating perpetually without batteries. 
Our contributions are twofold. First, we propose a novel localization method that exploits capture effect and orthogonal codes to operate at energy levels that are low enough to operate within the energy budget provided by long-range wireless power transmission. 
Second, we implement \name~using off-the-shelf components and test it extensively in a laboratory environment. Our test results show that with \name~one wireless charger per (16\,m$^{\text{2}}$) room can enable perpetual lifetime operation of mobile objects requiring localization with an average accuracy of almost 90\%. 
\end{abstract}

\section{Introduction}
\label{sec:introduction}

Indoor localization is a topic that has been investigated for more than a decade~\cite{gu:cst:2009,lymberopoulos:ipsn:2015}. Yet, no single system exists that is widely adopted as the de facto localization standard. 
From a research perspective the goal is to obtain a solution that is general, simple and accurate~\cite{lymberopoulos:ipsn:2015}. But as the number of devices in the Internet of Things grow, a new set of challenges are appearing: (i) maintainability, (ii) low operational costs, and more importantly, (iii) energy autonomy.

\subsection{Motivation: Indoor Localization with Wireless Power}

In large-scale indoor localization scenarios~\cite{lazik2015alps} the cost of replacing the batteries of thousands of anchor nodes (devices sending location information) and mobile nodes (devices to be localized) is high.

\textbf{Example:} Amsterdam Schiphol Airport has roughy 10 000 Bluetooth Low Energy (BLE) beacons deployed to provide navigation services. In order to maintain the operation of all BLE beacons, battery status monitoring mechanism of the localization beacons must be implemented. These include measuring beacon signal strength by crossing the entire airport area~\cite{moers:personal} or beacon signal strength crowdsourcing.
 
The research community has recognized this autonomous energy challenge and therefore the area of \ac{WPT} starts to gain momentum~\cite{xie:2012:wcm, xiao2014wireless}. In WPT systems a \emph{charger} radiates energy in the form of electromagnetic or mechanical waves to \emph{receivers} that harvest this energy. It would be thus valuable to combine the emerging area of WPT with the established area of indoor localization to propose a novel positioning system. 

\textbf{Challenges of RFID localization:} The idea of battery-less localization is not new. 
\ac{RFID} has been extensively researched for this purpose. 
Unfortunately, RFID-based localization has some inherent limitations. 
First, most \ac{RFID} localization approaches, e.g.~\cite{saab2011standalone}, require pre-deployed anchor tags,  
where portable \ac{RFID} readers estimate their position by detecting nearby backscatter signals from tags. 
Although the \ac{RFID} tags are batteryless, the mobile reader requires a lot of energy during tag scanning~\cite[Table I]{liu:2014:tud}, and due to the fast attenuation of backscatter signals, the density of \ac{RFID} tags must be high. 
Second, \ac{RFID} technology is also used in tracking systems~\cite{yang2014tagoram}, where mobile nodes carry a passive tag and static readers are used to track their location. In principle, this system is similar to ours, with WPT chargers playing the role of RFID readers. But \name~has the added advantage of having both the monitoring system and the tag itself being aware of the location. With RFID tracking, only the system knows the location of the tags, but the tags themselves are not aware of their own location. 

\subsection{Wireless Powered Indoor Localization: Research Challenge}

Localization and WPT are well researched topics on their own, but using WPT for localization entails a substantial challenge.
The problem is that WPT provides amounts of power that are too small for the operation of most radio-based localization systems. For example, experimenting with TX91501 power transmitter~\cite{tx91501_p2110}, due to the exponential decay of signal strength, the harvested power is 0.79\,mW at 3\,m. However, highly energy-efficient radios, such as BLE nodes, consume around 25\,mW in receiving mode. 
This small amount of power is insufficient for not only receiving packets from many anchor nodes but also to synchronize the operation of the localization system. 

\textbf{Research Question:} Based on the observation above we define the research problem as: \textit{given the limited harvested energy from WPT, how should a system manage the indoor localization process to achieve continuous and perpetual localization?}

\subsection{Our Contributions}
\label{sec:contribution}

Considering the research question described above we propose a unified set of solutions for wirelessly powered indoor localization. Namely:

\textbf{Contribution 1:} To minimize the radio transmission and receiving time of localization and meet the limited harvested energy supply, we propose a novel localization method where all anchors transmit localization messages \emph{simultaneously} and the induced collisions are resolved via orthogonal codes. The key advantage of this approach is that the radios of all nodes, including anchor nodes and nodes requiring localization, are active only for the duration of a single message transmission. 

\textbf{Contribution 2:} We implement and evaluate an operational system 
using off-the-shelf BLE motes~\cite{nrf51822} and WPT chargers and harvesters~\cite{tx91501_p2110}. Based on systematic experiments in an office environment, we demonstrate that  \name~can achieve perpetual indoor localization with room-level ($\approx$16\,m$^{\text{2}}$) and cell-level ($\approx$\,4\,m$^{\text{2}}$) accuracies of approximately 90\% and 70\%, respectively. To the best of our knowledge, \name~is the first system that successfully achieves room-level localization using the energy of RF based WPT. 

The rest of the paper is organized as follows. 
The related work is discussed in Section~\ref{sec:realted_work}. Our basic battery-less localization method achieving room-level accuracy, denoted as \name, is presented in Section~\ref{sec:intro_wiploc}, while its experimental evaluation is presented in Section~\ref{sec:results_wiploc_basic}.
Approaches to further save power and increase localization accuracy, denoted as \name++, are presented in Section~\ref{sec:wiploc_extension}, with its detailed evaluation presented in Section~\ref{sec:wiplocplus_implementation_evaluation}.
Finally, we present our conclusions in Section\textcolor{black}{~\ref{sec:conclusion}}.

\section{Related Work}
\label{sec:realted_work}

\subsection{Localization with Wireless Power Transfer}

The field of (indoor) localization has been researched for years~\cite{gu:cst:2009,lymberopoulos:ipsn:2015}. 
Interestingly engough to the best of our knowledge, we are not aware of any localization technique that uses \ac{WPT} except for TOC~\cite{shu:infocom:2014}. TOC obtains location information based on the time of charge provided by mobile chargers to the static nodes being localized. Unfortunately \begin{inparaenum}[(i)]\item TOC requires frequent position changes of mobile charger to obtain reasonable location accuracy and \item has been tested in outdoors only. \end{inparaenum} TOC belongs to a static receiver/mobile charger \ac{WPT} network type. Following the categorization of~\cite[Sec. III-A]{liu:2014:tud}, none of three remaining categories have been applied for localization (either indoor or outdoor). Specifically, localization infrastructure where static charger (not obstructing the area of localization) and mobile receiver being localized, which is the most desired.

\subsection{Localization with RFID}

Localization based on \ac{RFID} technology is developing rapidly in recent years. 
These approaches can be classified into three categories~\cite[Sec.III]{ni2011rfid}.
In the first category, RFID reader-based localization system, e.g.~\cite{zhu2014fault, saab2011standalone}, allocates RFID reader in the object requiring localization to detect the pre-deployed anchor tags nearby. 
Although the deployment and maintenance cost of RFID tags are low, the localization lifetime depends on the limited battery of mobile readers.
In the second category, RFID tag-based localization system, e.g. \cite{yang2014tagoram, ni2004landmarc}, tracks the location of object attached with \ac{RFID} tag by pre-deployed readers. 
The advantage of this approach is that the lifetime of tags is unlimited. 
However, the localization range is limited by the density of readers. 
Also, the tags cannot be localized once they are outside the range of readers. 
In the last category, \ac{RFID} device-free localization system, e.g. \cite{liu2012mining}, detects the position of target wearing no additional localization devices. 
The idea is to find the target location by detecting and comparing the change of \ac{RFID} signal in the environment. 
Tag-based and device-free localization are both passive localization/tracking, which do not fall into the research area of this paper. 

\section{\name: Wirelessly-Powered Localization}
\label{sec:intro_wiploc}

We propose a new indoor localization method for a freely moving device that trades-off accuracy for power consumption. The goal is to reduce the energy consumption of the localization system to a level allowing to power the localization infrastructure wirelessly over large distances.

\textbf{Selection of WPT technology.}
While many \ac{WPT} techniques exist~\cite[Table 1]{xie:2012:wcm} we chose the one based on \ac{RF} for two reasons. First, RF signals can serve the dual purpose of providing localization and energy. This simplifies the design, operation and cost of anchors and tags compared to systems using different signals for localization and energy transfer, such as sound or magnetic resonance. Secondly, it is the most promising \ac{WPT} technology as it allows for long-range power transfer even with small receiver antennas, which is challenging with other \ac{WPT} technologies such as induction-based \ac{WPT}. 

\textbf{How \name\, enables WPT-based localization.}
The main idea behind \name's low power consumption is to exploit synchronous packet transmissions to reduce the radio activity time and, in-turn, collision resolution through improved packet capture. 
In the subsequent sections we will introduce two main \name~ processes: (i) localization (Section~\ref{sec:challenge1}) and (ii) wireless energy supply (Section~\ref{sec:wptsystem_integration}), in detail.

\subsection{\name: Localization Protocol}
\label{sec:challenge1}

To get a clear understanding of how \name~ localization works we introduce all localization building blocks in detail.

\subsubsection{Deployment Area}

The system is designed for indoor use. \name's aim is to find a location of the moving object within strictly defined localization areas, i.e. rooms of an office environment. 

\subsubsection{Components}

The \name~ system consist of two building blocks: (i) Anchor Nodes and (ii) Mobile Nodes that want to be localized.
\begin{itemize}
\item {\textbf{Anchor Node:}} This node is deployed as static in a room and placed at specific locations that maximize signal reception by the Mobile Node in that room. The Anchor Node is pre-programmed with a unique ID that correlates with that specific room. The Anchor Node is constantly powered via a cable and is always in a receiving mode. For deployment simplicity each room has only one Anchor Node.
\item {\textbf{Mobile Node:}} This node can move between rooms and is powered by some form of wireless energy. The Mobile Node is constantly in sleeping mode and only wakes up after a pre-defined time after which it goes back to sleep. It has a pre-defined table with all anchor IDs to correlate the Anchors Nodes with a location (room).
\end{itemize}

\subsubsection{Localisation Algorithm}

The network is consisting of Anchor Nodes and Mobile Nodes, where the Mobile Nodes are localized by receiving the ID of the strongest anchor. \name~ localization belongs therefore to the proximity-based methods of indoor localization systems~\cite[Sec. II-C]{liu:smc:2007}. 

Localization methods that use packet radio are usually asynchronous to avoid collisions among packets sent by other anchors. In \name~however, all anchors send their packets at the same time, enforcing packet collisions. 
The mobile node leverages the capture effect~\cite[Sec. II-A]{arnbak:jsac:1987} to decode the strongest signal and assigns its location to that anchor. 
The key advantage of this method is its energy efficiency: Anchors and Mobile Nodes in the \name~network only need to be active for a single packet transmission and reception time slot.
A separate discussion is needed on packet synchronization, collision resolution and error correction.

\paragraph{Packet Synchronisation}
\label{sec:packet_synchronisation}

The packets from the anchors need to arrive  at the Mobile Node at the same time. 
For example, in our protocol each packet consists of a preamble of one byte, a payload and a CRC of the whole packet. 
To leverage the capture effect the packets should arrive within each other's preamble at the Mobile Node. 
To achieve this synchronization the Mobile Node broadcasts a \texttt{location-request} packet.
This is a synchronization packet that instructs all receiving anchors to immediately respond with their ID encoded in a payload. 
Leveraging the capture effect alone however has limitations~\cite[Sec. IV]{velze:percom:2013}. 
For the capture effect to work the strongest signal needs a certain minimum SINR. 
If this requirement is not satisfied, packets will collide and the anchor ID will not be retrieved.

\paragraph{Orthogonal Spreading Codes}
\label{sec:orthogonal_codes}

To overcome this limitation the Anchor Node ID is encoded with an orthogonal code to increase inter-packet distinction.
Each bit of the Anchor Node ID is multiplied by an orthogonal code unique for each anchor.
The encoded Anchor Node ID is then send in the payload of a packet.
The decoding process at each mobile node is an XOR operation between the payload of received packet with a list of orthogonal codes. In this paper a Hadamard matrix of size $k$ is used to generate the codes\footnote{Any other method can be used as long as the codes all have zero cross-correlation with each other.}, i.e.,

\begin{equation}
H_{2^k} = \begin{bmatrix} H_{2^{k-1}} & H_{2^{k-1}}\\ H_{2^{k-1}} & -H_{2^{k-1}}\end{bmatrix} = H_2\otimes H_{2^{k-1}},
\end{equation}
where $H_2 = \begin{bmatrix} 1 & 1 \\ 1 & -1 \end{bmatrix}$, $2 \leq k \in \mathbb{N}$
and $\otimes$ is the Kronecker product.

\paragraph{FEC layer}

Although orthogonal codes have a build-in tolerance for bit errors they are unable to decode them all.
For this reason the Anchor Node ID is first encoded with a \ac{FEC} code, before it is multiplied by the orthogonal codes.
The \ac{FEC} layer is constructed of maximum minimum Hamming distance codes~\cite{macdonald1960design} i.e. codes having equal Hamming distance to each other. 
The decoding of the FEC layer uses minimum distance decoding.

\subsubsection{Illustration---Eliminating Localization Dead Zone}
\label{sec:dead_zone_experiment}
To verify that the orthogonal codes are working correctly the following experiment was performed.

\paragraph{Experiment Hardware}
\label{sec:ble_hardware}

For convenience we select the nRF51822 SoC with a ARM Cortex M0 from Nordic Semiconductor with BLE support~\cite{nrf51822} as hardware platform to test the localization protocol.
For \name~however, we do not use the BLE protocol stack but use the radio peripheral of the nRF51822 and introduce our own packet and communication protocol instead. We refer to~\cite{source_code_wiploc} for the source code of the implementation.

\begin{itemize}
\item {\textbf{Anchor Node:}} The Smart Beacon Kit form Nordic Semiconductor~\cite{nrf51822} is used. This module has a coin size form factor with a PCB integrated antenna.
\item {\textbf{Mobile Node:}} The Nordic Semicoductor PCA10005~\cite{nrf51822} is used. It has an SMA connector with a connected quarter-wave helical monopole antenna of 1.6\,dBi gain\footnote{In the rest of the papers we will refer to both devices as nRF51822.}.
\end{itemize}

\paragraph{Experiment Setup}
Two anchors are placed two meters apart and the mobile node is placed at 20 points in a straight line between the anchors with each measurement point separated 10\,cm from each other. 
The transmission power of the Anchor Nodes are set to 0\,dBm. The mobile node stays at each point for around 30\,s and sends localization request every second.
First the localization experiment was performed without orthogonal codes.
After that the same experiment was performed with orthogonal codes.

\paragraph{Experiment Results}
In Fig.~\ref{exp:localization_before} we observe that in-between anchors, we obtain a \emph{dead zone}, i.e. area of no reception because in this region the SINR is insufficient to receive a correct packet and the packets collides which causes the packets to deform.
Fig.~\ref{exp:localization_after} demonstrates that the dead zone is eliminated with orthogonal codes. Furthermore, we observe that multiple anchor IDs can be decoded from one packet. 
We call this phenomenon \emph{multi-packet reception}.
The reason for the larger coverage of Anchor 1 in Fig.~\ref{exp:localization_after} is due to use of antenna with a different coverage pattern in that experiment.

\begin{figure}%
	\centering
	\subfigure[Without orthogonal codes]{
		\includegraphics[width=0.45\columnwidth]{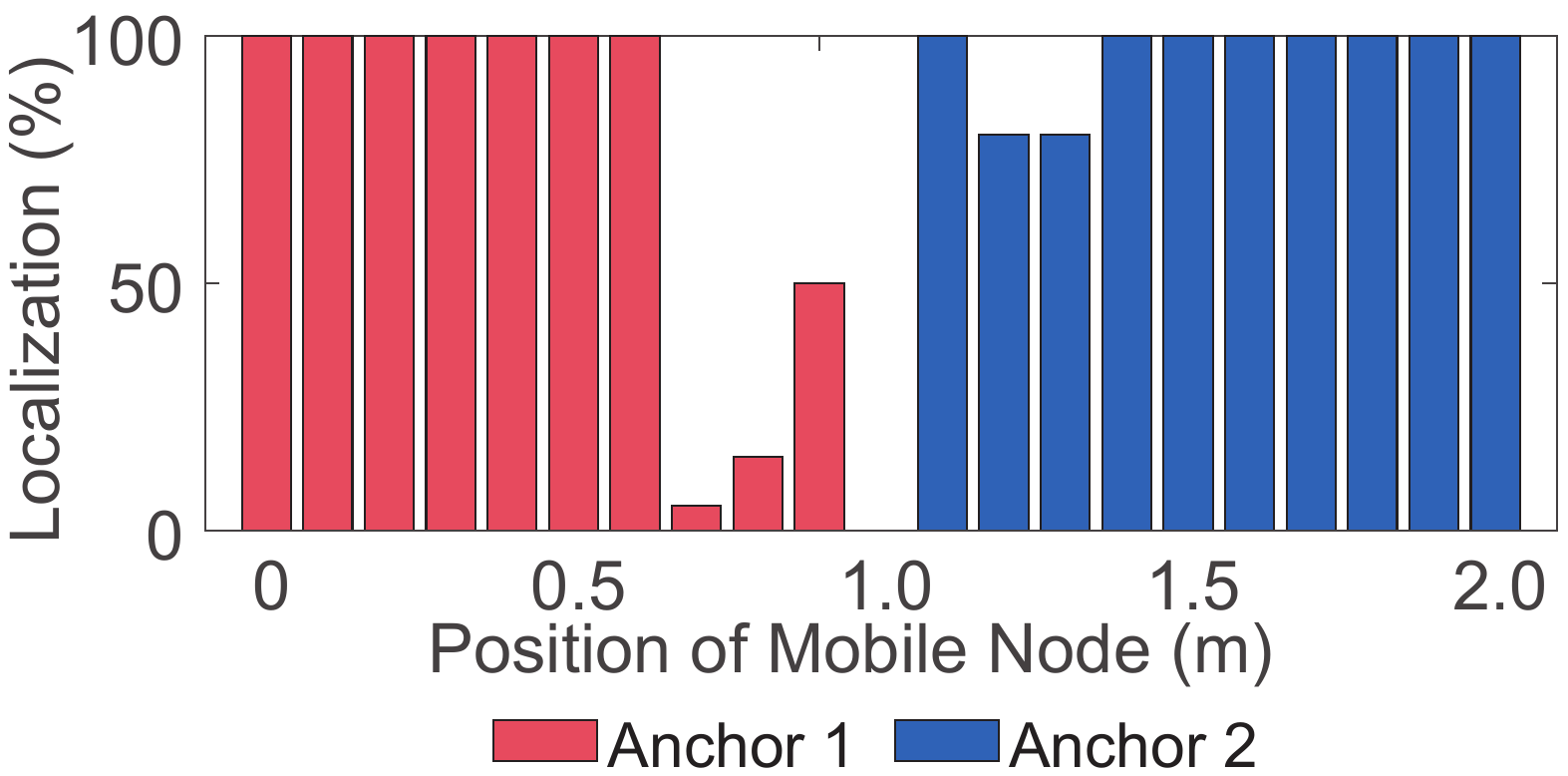}
		\label{exp:localization_before}
	}
	\subfigure[With orthogonal codes]{
		\includegraphics[width=0.45\columnwidth]{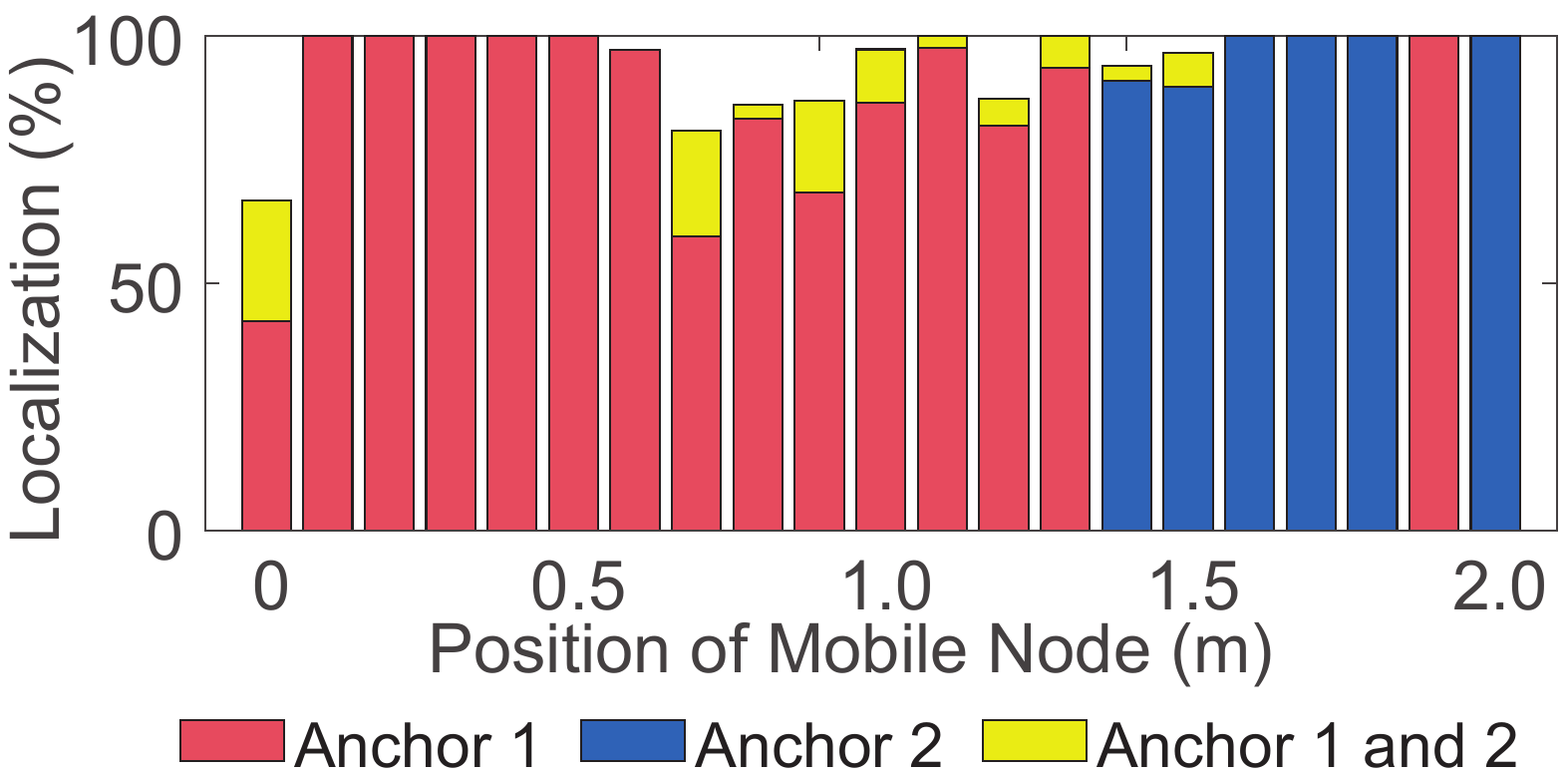}
		\label{exp:localization_after}
	}
	\caption{Localization accuracy experiment with two anchors placed two meters apart. We demonstrate the elimination of the packet reception \emph{dead zone} with orthogonal codes, see Section \ref{sec:dead_zone_experiment} for details of the experiment setup.}
	\label{fig:localization_before_after}
\end{figure}

\subsection{\name: Wireless Energy Supply}
\label{sec:wptsystem_integration}
Having low power features of \name~implemented, we are ready to extend \name~with wireless power localization features. As in Section~\ref{sec:challenge1} we describe all localization blocks with WPT enabled.

\subsubsection{Components}
\label{sec:wpt_components}
The \name~components described in Section~\ref{sec:challenge1} are in this section extended with WPT functionality.

\begin{itemize}
\item{\textbf{Anchor Node:}} As the anchor nodes are located at static and central in a room we combine them with a power transmitter. This transmitter should provide full coverage for one room.
\item{\textbf{Mobile node:}} This node is combined with a power harvester that harvests RF power form the RF power transmitter. Note that the WPT channel is different than the communication channel.
\end{itemize}

\subsubsection{Software Implementation}
\label{sec:software_implementation}

The localization algorithm is the same as in Section~\ref{sec:challenge1} however to enable it with WPT the following steps have been taken.
To keep the power consumption at a minimum all the peripherals of the Mobile Node are turned off except for one hardware timer which is set to generate an interrupt\footnote{We again refer to~\cite{source_code_wiploc} for the source code of the implementation.} at each period $t_m$. 
The CPU of the nRF51822 is most of the time in \emph{\ac{WFI}} state, which we denote as the \emph{sleep state}. 
When the interrupt is executed the nRF51822 is woken up and begins a localization round.
The localization round starts broadcasting a \texttt{location-request}. Next, the radio directly switches to receiving mode. 
If a valid location packet is received (meaning the CRC is correct) decoding the anchor ID is trivial. 
On the other hand, if the received packet is corrupted the orthogonal codes inside ensure that the anchor ID can still be decoded from the packets.
After this the Mobile Node goes back to sleep.
The complete program flow is depicted in Algorithm~\ref{alg:room_level}. 

\begin{algorithm}[t] 
\caption{\small Location protocol at the Anchor and Mobile Node}
{\scriptsize Anchor Node:}
\begin{algorithmic}[1]
\Loop \label{alg1:anchor_room}
	\If{\texttt{location-request} packet is received}
		\State Send \texttt{location-reply} packet back
	\EndIf
\EndLoop 
\end{algorithmic}
{\scriptsize Mobile Node:}
\begin{algorithmic}[1]
\Loop \label{alg1:mono_room}
	\State Sleep
	\If{timer $\geq t_m$} \Comment{See Section~\ref{sec:software_implementation}}
		\State Broadcast \texttt{location-request}
		\If{\texttt{location-reply} packet received}
			\State Decode and calculate location
		\EndIf		
	\EndIf
\EndLoop 
\end{algorithmic}
\label{alg:room_level}
\end{algorithm}

There are two types of packets sent by the \name~protocol. Both types have a fixed payload length of 30 bytes. 
Next section will elaborate on how the payload of the packets is constructed for each type.

\paragraph{Location Request Packet}

The\,\texttt{location-request} packet contains in the first two bytes of the payload the group ID of anchors.
The group ID of all anchors is the arbitrary integer $j$.
The anchors are then programmed to accept all packets that contain this integer $j$ in the first two bytes. 
The packet send back by the Anchor to the Mobile Node is a \texttt{location-reply} packet.

\paragraph{Location Reply Packet}

The \texttt{location-reply} packet contains in the payload the anchor ID encoded by the FEC layer which is then encoded by the orthogonal codes and this is stored in the payload of the packet.

\paragraph{Location Reply Encoding Process}
All the Anchor Nodes have an array $\mathbf{C}$, $|\mathbf{C}|=N$ with FEC codes having an equal Hamming distance $d$ to each other. 
The anchor ID is then FEC encoded by replacing the anchor ID with a $n$-th FEC code from the array.
The orthogonal codes are generated using the method described in Section~\ref{sec:orthogonal_codes}.
For the generated matrix each $-1$ symbol is replaced with a $0$ and each row of the matrix represents a binary spreading code.
Finally, every bit of the FEC code is represented by the $n$-th orthogonal code from the array. 
If the bit is zero, the bitwise NOT of the orthogonal code is used\footnote{The size of the orthogonal codes is 16 bits and this results in a total message length of 30 bytes that fits in the payload of the radio packet.}.

\paragraph{Location Reply Decoding Process}
As stated in Section~\ref{sec:challenge1} also the Mobile Node has an array of all orthogonal and \ac{FEC} codes used by the anchors.
For every entry in the orthogonal code array the Mobile Node tries to decode the packet. 
When a candidate anchor ID is found, the decoded code is compared to the correlating FEC code of the candidate ID. 
When the Hamming distance $d_c$ of the candidate FEC code compared to a code from the FEC array follows $d_c < \frac{d}{2}$ we assume that the candidate ID is the correct one.
The decoding stops when the last orthogonal code in the array is used to decode a packet. 
The source code accompanying this paper is available at~\cite{source_code_wiploc}.

\subsubsection{Hardware Implementation}
\label{sec:hardware_implementation}
The \name~localization components are connected as follows, see Fig.~\ref{fig:schematic_complete}:

\begin{itemize}
\item{\textbf{Anchor Node:}} This node is combined with the Powercast TX91501 Powercaster transmitter~\cite{tx91501_p2110}. 
It has an \ac{EIRP} of 3\,W and operates at a \ac{RF} center frequency of 915\,MHz. 
There is no signal connecting the powercaster to the nRF51882. 
The Powercaster is always active and is always sending \ac{RF} energy into the environment. In the rest of the paper we will refer to this component as \emph{powercaster}.
 
\item{\textbf{Mobile Node:}} The Powercast P2110B power harverster~\cite{tx91501_p2110} is used as power supply for the nRF51822. 
It is a development PCB with a SMA connector for connecting an antenna for harvesting power. 
Two antennas can be selected: (i) a vertical polarized omni-directional dipole antenna with 1.0\,dBi gain and (ii) a vertical polarized patch antenna with a 6.1\,dBi gain. 
We will refer to this component as the \emph{harvester}. 
\end{itemize}

\begin{figure}
	\centering
	\subfigure[Mobile Node implementation]{
		\includegraphics[height=2.50cm]{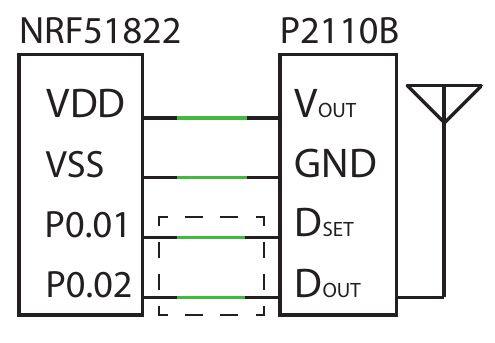}
		\label{fig:connections_nrf5188_powercast}
	}\subfigure[Anchor Node implementation]{
		\includegraphics[height=2.50cm]{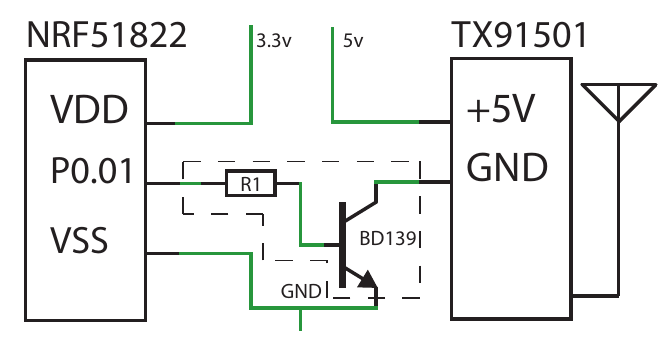}
		\label{fig:charbe_schematic}
	}
	\caption{Implementation of \name~components: (a) \mono: Nordic Semiconductors nRF51822 SoC~\cite{nrf51822} connected to Powercast energy harvester P2110B~\cite{tx91501_p2110}; and (b) \charbe: nRF51822 connected to Powercast power transmitter~\cite{tx91501_p2110}. Note that in case of nRF51822 only the relevant pin connections are shown as other pins are left as not connected. The connections inside the dotted line are only applicable for \name++, refer to Section~\ref{sec:wiploc_extension}. The value for $R_1=330$\,$\Omega$.}
	\label{fig:schematic_complete}
\end{figure}

\section{\name: Experiment results}
\label{sec:results_wiploc_basic}

From~\cite{cirstea:2013:aece} we know that the lower bound for harvested energy from a distance of four meters from the powercaster is --8\,dBm. In order to operate within the powercaster range in our setup the average power consumption of the Mobile Node should be below 0.16\,mW.

To verify the energy efficiency of the \name~components the nRF51822 is measured with a Power Monitor~\cite{monsoon_website}.
The nRF51822 has three power states: \emph{sleep} (WFI), \emph{transmitting} (TX) and \emph{receiving} (RX). 
The localization period $t_m = 1.0$\,s and the transmit power is set to 4\,dBm.
A  localization round starts with a TX followed by RX and WFI.
We measured the energy consumption of each state separately, repeated each measurement ten times and averaged them.
Table~\ref{tab:power_mobile_node} presents the power consumption measurements.
The measurements show that the average power consumption is 0.20\,mW which is very close to the requirement of 0.16\,mW. We are thus ready to implement \name.

\begin{table}
	{\centering
	\caption{Consumed power in each fundamental state of the \name~localizarion protocol}	
	\label{tab:power_mobile_node}
	\begin{tabular}{ c | c|c | c|c | c|c p{\columnwidth}}
		State & \multicolumn{2}{c|}{Power (mW)} & \multicolumn{2}{c|}{Time (ms)} & \multicolumn{2}{c}{Energy ($\mu$J)}\\
		\hline
		\hline
		Transmitting& 35.88&35.88 & 0.80&0.80 & 28.7&28.7\\
		\hline
		Receiving & 20.17&26.05 & 0.60&8.29 & 12.1&216.1\\
		\hline
		ADC & --- &1.69 & --- &0.65 & --- &1.10\\
		\hline
		WFI & 0.15&0.14 & 998.60&925.9 & 149.8&138.9\\
		\hline\hline
		Average& 0.19&0.49 & 1000.0&1000.0  & 190.6&493.4 \\
	\end{tabular}
	}\\ \\
	Note: The left column denotes the power consumption of the Mobile Node, the right column denotes the WPA (See Section~\ref{sec:wiploc_extension}).
\end{table}

\subsection{Experiment Setup}

Each location is divided in four of two by two meters cells. 
At the center of the cell is an test location for the mobile node. Each room has four test locations. Every device is placed 1.0\,m above the floor and they are all in line of sight from each other.

For every experiment that is done the following yields. 
The mobile node is placed at every testing location in room one and two and the corridor. 
On every test location the mobile node initiates 50 localization rounds. The localization period is set to 1.0\,s and the transmit power of the Anchor and Mobile Node is set at the maximum of 4\,dBm.

\subsubsection{Experiment Scenarios}
The following three experiments were performed. 

\textbf{Experiment 1:} One Anchor Node is placed in the center of one room and the Mobile Node is placed at every testing location in room one and two and the corridor.  
As there is only one Anchor Node in the area we consider all packets decoded with this anchor ID correctly localized.

\textbf{Experiment 2:} Two Anchor Nodes are deployed, each in the center of room one and two.
Two Voronoi cells around the Anchor Nodes are defined and if the Anchor Node is localized in the Voronoi cell that corresponds to the Anchor ID that is decoded we consider it as correctly localized.

\textbf{Experiment 3:} Three Anchor Nodes are deployed, each in the center of room one,  two and the corridor. 
The Mobile Node is correct if the localization result is in the correct room. 
We make use of the fact that the walls attenuate the RF signals from the nRF51822 and that the signal range will adjust to the room layout accordingly.

\subsubsection{Data Acquisition}
As the Mobile Node is powered wirelessly interfacing with the nRF51822 would consume power which we then can not be used for localization. 
We overcome this problem by using a BLE USB dongle as a sniffer~\cite{nrf51822}. 
This sniffer monitors all packets sent by the Mobile Node.
In our experiment the result of every localization round is send in the next localization round within the \texttt{location-request} packet.
The sniffer then receives the data and saves it on to a text file on the PC for further data processing.

\begin{figure}  
\centering
\subfigure[Top view of \name~deployment in the office environment]{\includegraphics[height=0.45\columnwidth]{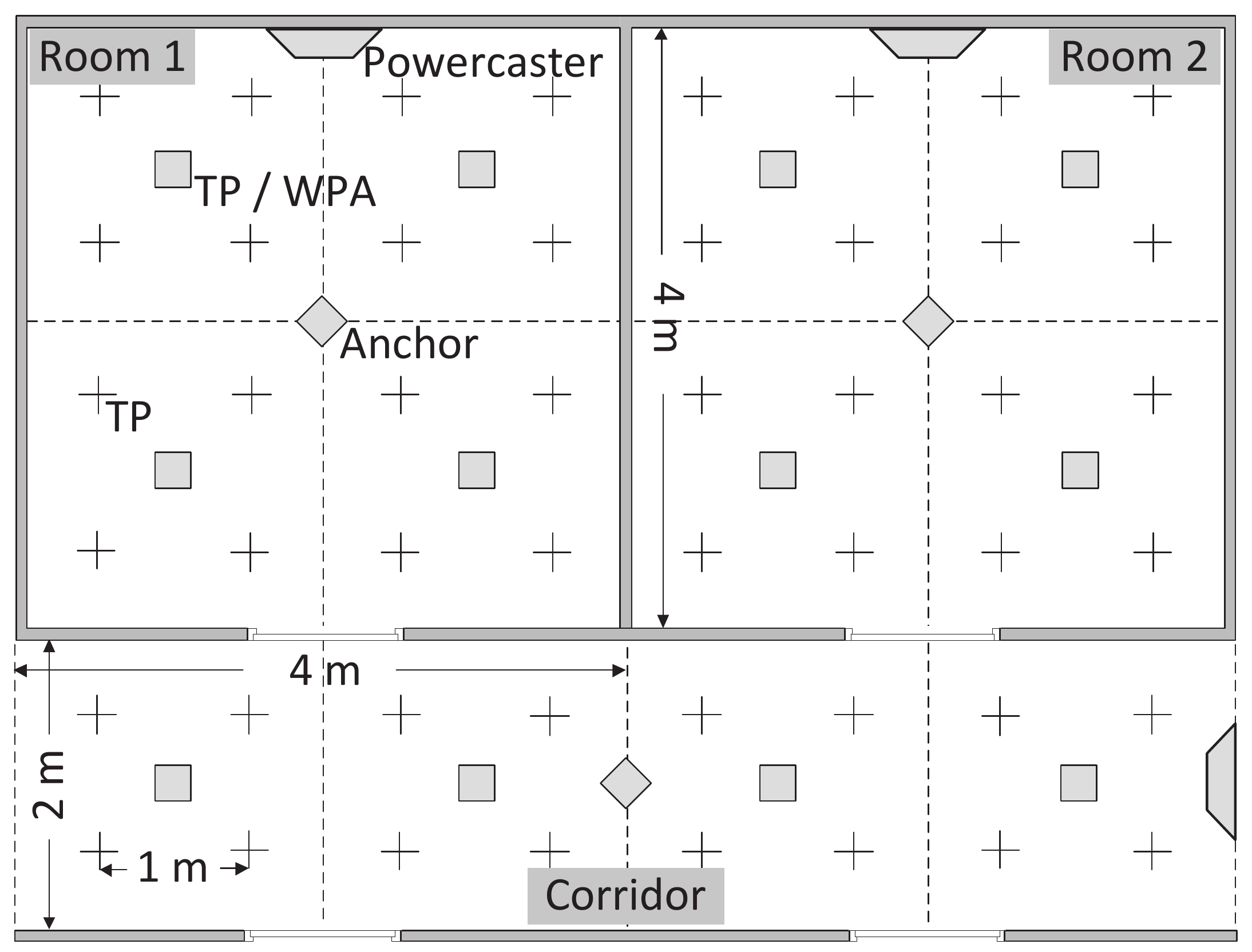}\label{fig:experiment_setup}}
\subfigure[A picture of corridor \name~experiment setup]{\includegraphics[height=0.45\columnwidth]{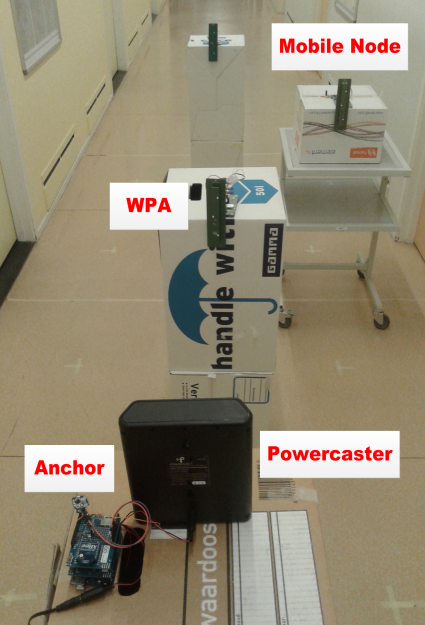}\label{fig:photo_setup}}
\caption{Experiment setup: \textbf{TP/WPA}: The places marked as "$\square$" are used as the testing positions (TP) of {\mono} in the room-level localization experiment of Section~\ref{sec:results_wiploc_basic} and the deployment positions of {\anbe} and in the cell-level localization experiment of Section~\ref{sec:wiploc_extension}, respectively. \textbf{TP}: places marked as "$+$" are the testing positions (TP) of {\mono} in the cell-level localization experiment of Section~\ref{sec:wiploc_extension}.}
\label{fig:experiment_cell_level}
\end{figure}

\subsection{Experiment Results}

For the evaluation of \name~we introduce two metrics: \emph{accuracy} and \emph{\ac{PRR}}. 
We define accuracy as follows. When the Mobile Node is localized in the room where it is currently located we count the localization as successful. When the Mobile Lode is localized to another room then where it is currently located it is counted as unsuccessful.
PRR is defined as the number of \texttt{location-request} packets send by the mobile node divided by the number of received \texttt{location-reply} packets.

At every test location 50 localization rounds were performed.
For each test location \ac{PRR} and the accuracy is computed and averaged.
The results are shown in Table~\ref{tab:result1} where the accuracy is normalized to the PRR. 
We observe that if the number of anchors deployed increases the \ac{PRR} decreases. 
The accuracy also decreases when the number of anchors increases.
The \ac{PRR} decreasing is mostly due the fact of packet collisions.
Nevertheless, even in the worst case (Experiment 3) we demonstrate that we achieve (i) extremely low power consumption for localization using collision packets, and (ii) accurate room-level localization for {\mono} using only WPT energy.

\begin{table}
	\centering
	\caption{\name~localization experiment result}
	\label{tab:result1}
	\begin{tabular}{ c | c | c  p{\columnwidth}}
		 & PRR (\%) & Accuracy (\%)\\
		\hline
		\hline
		Experiment 1& 100 & 100 \\

		Experiment 2& 99.3 & 95.5 \\

		Experiment 3& 89.6 & 84.6 \\
	\end{tabular}
\end{table}

\section{\name++: Extending \name~to Cell-Level Localization Accuracy}
\label{sec:wiploc_extension}

So far we have demonstrated how \name~allows us to accurately localize items per room. 
The question is how to improve localization accuracy from room-level to cell-level.

\subsection{\name: Challenge of Cell-level Localization}

\name~is designed to cope with localization at a room-level ($\approx$\,16\,${{\rm{m}}^2}$) accuracy but does not allow to improve the accuracy further down.
We shall describe specific problems of \name~ related to this functionality and propose our improvements, which we collectively shall denote as \name++.

\subsubsection*{\textbf{Problem 1}---Limited Energy for Synchronization}

Referring again to Table~\ref{tab:power_mobile_node}, we see the the power consumption of packet reception ($\approx$26\,mW) is above the limit of harvested power ($\approx$0.8\,mW at 3\,m). 
Synchronization is needed when operating with multiple Anchor Nodes this requires idle listening continuously, However this is not possible if {\charbe}s are required to operate under the harvested RF energy. 

\textbf{Solution to Problem 1:} 
We propose a semi-passive wakeup scheme to allow {\charbe}s listening to the synchronization signal only when there is a localization request. 
This solution will be described in Section~\ref{sec:passive_wakeup}.

\subsubsection*{\textbf{Problem 2}---Scalability of \name}

Normally, with more {\charbe}s we obtain higher localization accuracy. 
To prove this assumption, we deploy 2, 3 and 4 {\charbe}s, respectively, in room one and test the PRR and accuracy respectively. 
The deployment setup and localization testing positions are illustrated in Fig.~\ref{fig:experiment_setup} while the test results are shown in Fig.~\ref{pic:cell_level_performance_drop}.
The key observation is that the PRR and accuracy of cell-level localization decrease as the number of {\charbe}s increase.
This is mainly caused by the radio interference to the orthogonal code from multiple {\charbe}s. 
It means that interference caused by multiple {\charbe}s will limit the accuracy and scalability of \name~in dense Anchor Nodes deployments. 

\textbf{Solution to Problem 2:} 
We propose to restrict the number of {\charbe}s used for cell-level localization by using WPT Receive Signal Strength (RSS) based distance estimation, which will be presented in Section~\ref{sec:distance_estimation}.

\begin{figure}
 \centering
 \includegraphics[width=\columnwidth]{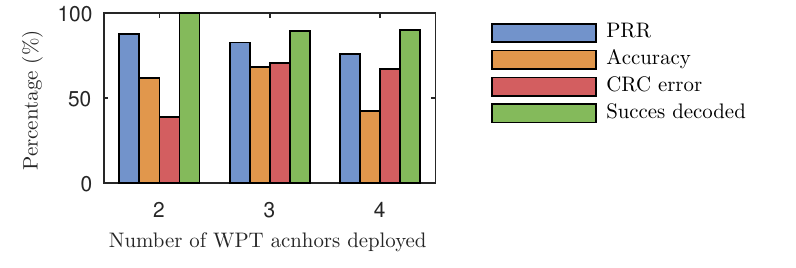}
 \caption{Cell-level localization as a function of number of Anchor Nodes deployed. The radio interference from multiple Anchor Nodes causes the increase of CRC error and the decrease of success decode rate.}
 \label{pic:cell_level_performance_drop}
\end{figure}

\subsection{ID Based Semi-Passive Wakeup}
\label{sec:passive_wakeup}

The collision based localization introduced in Section~\ref{sec:intro_wiploc} requests the synchronization of packet transmission and reception among multiple nodes. 
The {\charbe}, also responsible for charging, has a fixed power supply, \textit{ergo} enough energy to listen to the localization request and synchronization signal continuously. 
If we want to increase the number of {\charbe}s without a fixed power supply, they also need to be be powered by wireless power.

\subsubsection{Extra Localization Component}

Following from the above observation we introduce a new node aiding in localization.

\begin{itemize}
\item \textbf{Wirelessly-Powered Anchor Node (\textbf{{\anbe}}):} This node is the same as Anchor Node, however it operates purely based on harvested RF energy from  the {\charbe}.
\end{itemize}

Energy harvested from {\charbe} is not enough for {\anbe}s to listen to the synchronization signal continuously. 
Therefore, we propose ID based semi-passive wakeup approach to wakeup {\anbe}s from sleeping mode only when the localization request is sent from {\mono}s. 
The method works as follows. 

\subsubsection{Wakeup Process} 

As in {\name}, {\charbe}s are deployed one per room and constantly switched on for wireless charging. 
Using the harvesting energy from neighbor {\charbe}s, {\anbe}s periodically wakeup from sleep mode and perform the measurement of the voltage of the harvested power with Analog to Digital Converter (ADC) port. 
Based on our measurements (see again Table~\ref{tab:power_mobile_node}), the power consumption of ADC measurement ($\approx$\,1.5\,mW) is of the same magnitude as the harvested power (from $\approx$\,3.2\,mW at 1\,m to $\approx$\,0.79\,mW at 3\,m). 
Although the power consumption of ADC measurement is larger than harvested power at the distance of 3\,m. We only conduct one ADC measurement every period $t_c$ and are in sleeping mode the rest of time. This ensures that the average power consumption low enough.

Now, suppose that the {\mono} has initiated the room-level localization. 
Then {\mono} broadcasts a \texttt{localization request} to the Anchor Nodes.
The Anchor Node will send a \texttt{location reply} back and will send a passive wake up signal to all WPAs in the same room: switch off (the powercaster) charging for a short time and then switch back on again. 
Then the {\charbe} sends \texttt{sleep} commands to the {\anbe}s that are required to stay a sleep for the cell-level localization.
 
At the WPA, Once a voltage falling of the harvested power is measured on $D_{\text{out}}$, the {\anbe} wakes up from sleeping mode and starts listening to radio packets.
If the {\anbe} does receive a \texttt{sleep} it keeps listening for a \texttt{location-request} from the Mobile Node.

Meanwhile, the {\mono} received the \texttt{location-reply} from the Anchor Node and then waits for $t_c$ to broadcast the \texttt{location-request} to the WPAs.
Then {\mono} is localized based on the collision based localization approach as explained in Section~\ref{sec:intro_wiploc}. 
The whole process is described in Algorithm~\ref{alg:wiploc_localization_protocol}. 

It is worth noting that the wake up time of each \anbe~and \mono~have small differences. 
The main reason for that is the time for measuring the semi-passive wake up signal at each {\anbe} is not synchronized. 
To guarantee that all {\anbe}s can hear the localization synchronization signal from {\mono}, all {\anbe}s wakeup immediately after receiving the semi-passive wakeup signal and listen a maximal period of $t_c$ and sleep again if they receive nothing in that time frame.

\begin{figure}
\centering
\includegraphics[width=0.9\columnwidth]{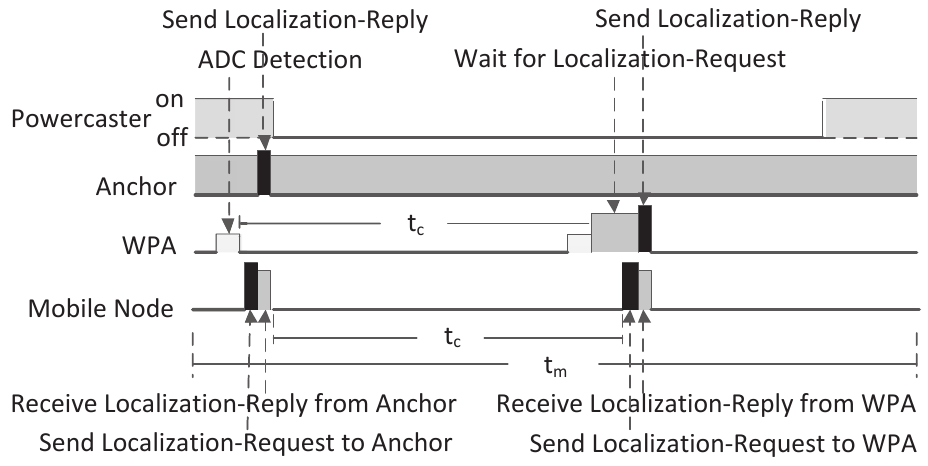}
\caption{Semi-passive wakeup process during localization period $t_{m}$.}  
\label{pic:passive_wakeup}
\end{figure}

\subsubsection{Optimization}
Although the power consumption of ADC is much lower than receiving according to the measurement results in Table~\ref{tab:power_mobile_node}, for efficiently using the limited WPT energy we optimize the ADC period $t_{\text{c}}$ to further decrease the total power consumption of {\anbe}.
Assume that $i\in \mathbf{N},\left| \mathbf{N} \right| = n$ {\anbe}s are deployed around the semi-passive wakeup range of {\charbe}. 
The time that the semi-passive wakeup signal is detected by the ADC measurement at {\anbe} $i$ is denoted as $t^{i}_{a}$. 
After sending \texttt{passive-wakeup-request} to {\charbe}, 
the {\mono} sleeps for $t_{c}$ and then sends \texttt{localization-request}. 
Then each {\anbe}s are waken up and listen for $t_{\text{rx}}^i = {t_c} - t_{a}^{i}$. 
To simplify the analysis we assume that $t_a^i$ are independent and uniformly distributed in $t_c$ with CDF $F(t_a^i) = {\textstyle{t_a^i \over {{t_c}}}},t_a^i \in [0,{t_c})$. 

\begin{proposition}

The value of $t_{c}$ producing minimum expected power consumption at WPA, $P_{a}$, is 

\begin{equation}
\arg \min_{t_c}{\left[ {{\mathrm{E}(P_a)}} \right]} \approx \sqrt {{\textstyle{{2{t_\text{m}}{t_{\text{adc}}}({p_{\text{adc}}} - {p_{\text{wfi}}})} \over {{p_{\text{rx}}} - {p_{\text{wfi}}}}}}}.
\label{eq:min_value_para}
\end{equation}
\end{proposition}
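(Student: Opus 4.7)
The plan is to express $\mathrm{E}(P_a)$ explicitly as a function of $t_c$ by accounting for the expected energy the WPA spends in each of its three power states (ADC sampling, radio listening, and WFI sleep) over one localization period $t_m$, and then minimize it by standard single-variable calculus.

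First I would set up the per-period energy budget. During one interval of length $t_m$ the WPA performs roughly $t_m/t_c$ ADC samples, each of duration $t_{\text{adc}}$ at power $p_{\text{adc}}$, contributing expected energy $(t_m/t_c)\,t_{\text{adc}}\,p_{\text{adc}}$. Using $\mathrm{E}[t_a^i]=t_c/2$ from the assumed uniform distribution on $[0,t_c)$, the expected listening time per localization cycle is $\mathrm{E}[t_{\text{rx}}^i]=t_c-\mathrm{E}[t_a^i]=t_c/2$, contributing $(t_c/2)\,p_{\text{rx}}$. The remaining time in $t_m$ is spent in WFI at power $p_{\text{wfi}}$. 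Dividing by $t_m$ and collecting terms yields
\begin{equation*}
\mathrm{E}(P_a) \;=\; p_{\text{wfi}} \;+\; \frac{t_{\text{adc}}}{t_c}\bigl(p_{\text{adc}}-p_{\text{wfi}}\bigr) \;+\; \frac{t_c}{2t_m}\bigl(p_{\text{rx}}-p_{\text{wfi}}\bigr).
\end{equation*}

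Next I would differentiate with respect to $t_c$, obtaining
\begin{equation*}
\frac{d\,\mathrm{E}(P_a)}{d t_c} \;=\; -\,\frac{t_{\text{adc}}}{t_c^{2}}\bigl(p_{\text{adc}}-p_{\text{wfi}}\bigr) \;+\; \frac{1}{2t_m}\bigl(p_{\text{rx}}-p_{\text{wfi}}\bigr),
\end{equation*}
and set the derivative to zero to solve for $t_c$, which gives $t_c^{2}=2t_m\,t_{\text{adc}}(p_{\text{adc}}-p_{\text{wfi}})/(p_{\text{rx}}-p_{\text{wfi}})$, matching \eqref{eq:min_value_para}. A brief second-derivative check, noting that the second derivative $2t_{\text{adc}}(p_{\text{adc}}-p_{\text{wfi}})/t_c^{3}$ is positive for any physically meaningful regime ($p_{\text{adc}}>p_{\text{wfi}}$, $p_{\text{rx}}>p_{\text{wfi}}$), confirms the critical point is a minimum.

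The only real subtlety, and the reason the statement reads ``$\approx$'' rather than ``$=$'', is the idealization in counting ADC events. Strictly, the number of ADC samples in $t_m$ is $\lfloor t_m/t_c\rfloor$ and the final RX event may be cut short by the end of the period, so the expression above is exact only in the continuous/stationary limit $t_c\ll t_m$. I would explicitly state this assumption, argue that it is justified because the parameter regime of interest has $t_c$ on the order of milliseconds while $t_m$ is on the order of a second, and note that under that assumption the approximation error is $O(t_c/t_m)$ and the optimizer is unchanged to leading order. This is the only step that requires care; the rest is routine.
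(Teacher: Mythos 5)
Your proposal is correct and follows essentially the same route as the paper: write the expected power over one period $t_m$ using $\mathrm{E}[t_{\text{rx}}^i]=t_c/2$ and the continuous relaxation $k_{\text{adc}}\approx t_m/t_c$ of the ADC-sample count, then set the derivative with respect to $t_c$ to zero. Your omission of the transmission term $p_{\text{tx}}t_{\text{tx}}$ (which the paper carries along) is immaterial since it is constant in $t_c$, and your added second-derivative check and explicit justification of the floor-function approximation only strengthen the argument.
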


\begin{proof}
The average power consumption of a {\anbe} during a localization period $t_{m}$ is 
\begin{equation}
P_{\text{a}} = \frac{P_{\text{rx}}t_{\text{rx}}+P_{\text{tx}}t_{\text{tx}}+k_{\text{adc}}P_{\text{adc}}t_{\text{adc}}+P_{\text{wfi}}t_{\text{wfi}}}{t_m},
\label{eq:power_consumption}
\end{equation}
where the expected time of waiting for interruption during $t_{m}$ is $t_{\text{wfi}} =  t_m - (k_{\text{adc}}t_{\text{adc}} + t_{\text{rx}} + t_{\text{tx}})$, where ${k_{\text{adc}}} = \left\lfloor {{\textstyle{{{t_m}} \over {{t_c}}}}} \right\rfloor$ denotes the the number of ADC measurement during $t_m$ and $t_{\text{adc}}$, $t_{\text{rx}}$ and $t_{\text{tx}}$ denote time spent in ADC measurement, packet reception and transmission, respectively. 
To calculate $t_{\text{rx}}$ recall that the expectation of $t_a^i$ is $\mathrm{E}(t_a^i) = {\textstyle{{{t_c}} \over 2}}$.
Then the expected waiting time is $t_{\text{rx}} = t_{c} - \mathrm{E}(t_a^i) = {\textstyle{{{t_c}} \over 2}}$.

As $k_{\text{adc}}$ is discreet we replace it with a continuous value ${k^{c}_{\text{adc}}} = {\textstyle{{{t_m}} \over {{t_c}}}}$ for estimating the range of the minimum value $\mathrm{E}(P_{a})$. 
The value of $t_{c}$ that minimizes the expectation of $P_{a}$ can be calculated as 
$\arg\min_{t_c}{\left[ {\mathrm{E}({P_a})} \right]} = \{ {t_c}\left| {{\textstyle{{\partial {P_a}} \over {\partial {t_c}}}} = 0,\ {k_{\text{adc}}} = k_{\text{adc}}^{\text{c}}} \right.\}$ which results in (\ref{eq:min_value_para}). 
\end{proof}

\subsection{Range Estimation using RSS of WPT Signal}
\label{sec:distance_estimation}

The purpose of this component is to narrow down the possible location of the \mono, and restrict the number of {\anbe}s used for collision-based localization. We shall now justify the selection of the range estimation process.

\subsubsection{Reason for Range Estimation using WPT RRS}
We select range estimation using WPT RSS for two reasons: (i) the charging radio of WPT covers the localization area already, which does not require any additional communication component; (ii) Fig.~\ref{pic:rss_distance} illustrates the voltage measurement results using ADC (at $D_{\text{out}}$ pin) for various distances. The figure also illustrates that the harvested power RSS is in much more stable then the RSS of the nRF51822 at the same distance. We take advantage the attenuation of voltage over distances to estimate whether the {\mono} is inside or outside a requested range.

\subsubsection{Range Estimation Process} 

The RSS-based range estimation works as follows. Following the same deployment model as in Section~\ref{sec:passive_wakeup}, we define the cell of {\anbe} $i$ as ${\Gamma _i}$, and the voltage measurement results using ADC of a {\mono} and {\anbe}s as $\xi$ and $\mu_{i}$, respectively.
The threshold RSS value, $\theta$, is used to classify the cells of {\anbe}s. 
The cells of {\anbe}s are then categorized as (i) $\Delta_{c}$ = $\{ {\Gamma _i}\ |\ \mu_{i}  \ge \theta \}$, (ii) $\Delta_{f}$ = $\{ {\Gamma _i}\ |\ \mu_{i}  < \theta \}$, and (iii) $\overline \Delta   = \left\{ {{\Delta _f}\ |\ \xi  \ge \theta } \right\} \text{or} \left\{ {{\Delta _c}\ |\ \xi  < \theta } \right\}$.

The {\mono} broadcasts a \texttt{localization-request} with $\xi$. 
If an {\charbe} receives a \texttt{localization-request} it will compare $\xi$ with the threshold value $\theta$. 
If $\xi \leq \theta$, then the Anchor Node sends a semi-passive wakeup signal to {\anbe}s in $\Delta_{f}$; otherwise, it wakes up {\anbe}s in $\Delta_{c}$ for localization.  

If the {\mono} does receive a \texttt{localization-reply} from the WPAs, it will then request the {\charbe} to wake up the {\anbe}s in other area for cell-level localization. 
The detailed process is presented in Algorithm~\ref{alg:wiploc_localization_protocol}. 

\subsubsection{Discussion}

We need to point to certain limitations of the \name++ localization.

\begin{enumerate}
\item The value $\theta$ can calculated by $(\sum\nolimits_{i = 1}^n {{\mu _i}} )n^{-1}$. 
To simplify the implementation we use the pre-measured RSS $\theta=3.7$\,dBm at the geographic middle position of the deployment area of {\anbe}s around the {\charbe}. Such approach however might not be always practical.

\item Values of $\xi$ and $\mu$ can be affected by obstacles between powercaster and {\anbe}\,/\,{\mono}, relative direction of antennas between charger and harvester, etc. Therefore WPT RSS-based range estimation can only be used for estimating the {\mono} location with a coarse resolution.

\item The transmission power of {\anbe} messages is set to cover only the cell area which itself belongs to. 
If the {\mono} is not inside the estimated range using $\xi$, it may not receive a \texttt{localization-reply} from {\anbe}s belonging to the estimated range. 
To compensate this case, {\charbe} wakes up the {\anbe}s in the other area for the next round of localization. 
\end{enumerate}

\begin{figure}
 \centering
 \includegraphics[width=\columnwidth]{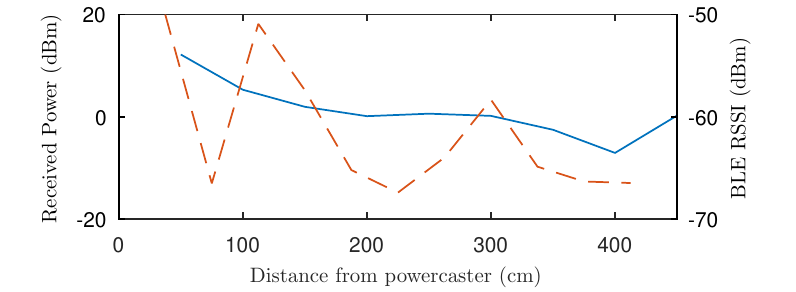}
 \caption{Signal strength of WPT and BLE communication signal. The solid line shows the received power calculated from the $D_{\text{out}}$ from the harvester. The dotted line shows the RSSI from the nRF51822.}  
 \label{pic:rss_distance}
\end{figure}

\begin{algorithm}[!t]
\footnotesize
\caption{\small Cell-level localization protocol}
\scriptsize
{\charbe} (Fixed Powered Anchor Node):
\begin{algorithmic} [1]
\Loop \label{alg2:1}
	\If{\texttt{location-request} received}
		\State Send \texttt{location-reply}
		\State Switch off and on powercaster as \texttt{passive-wakeup} signal
		\If {$\xi \leq \theta$}
			\State Broadcast \texttt{sleep} command for {\anbe}s in $\Delta_{c}$
		\Else
			\State Broadcast \texttt{sleep} command for {\anbe}s in $\Delta_{f}$
		\EndIf
	\EndIf
	\If{\texttt{location-request} with \texttt{no-reply} received}
		\State Switch off and on powercaster as \texttt{passive-wakeup} signal
		\State Broadcast \texttt{sleep} command for {\anbe}s in $\overline \Delta$
	\EndIf
\EndLoop 
\end{algorithmic}
{\anbe} (Wireless Powered Anchor Node):
\begin{algorithmic} [1]
\Loop \label{alg2:3}
	\State Sleep
	\State Monitor $D_{\text{out}}$ every $t_c$
	\If{\texttt{passive-wakeup} detected}
		\State Start receiving
		\If{\texttt{Sleep} command received}
			\State Goto Sleep
		\EndIf
		\If{\texttt{location-request} received}
			\State Send \texttt{location-reply}
		\EndIf
	\EndIf
\EndLoop 
\end{algorithmic}
{\mono}:
\begin{algorithmic} [1]
\Loop \label{alg2:2}
	\State Sleep
	\State Monitor $D_{\text{out}}$ every $t_c$.
	\If{timer $\geq t_m$}
		\State Broadcast \texttt{location-request}
		\If{\texttt{location-reply} received}
			\State Wait for $t_c$
			\State Broadcast \texttt{location-request} to WPAs
			\If{\texttt{location-reply} received from WPAs}
				\State Decode \texttt{location-reply}s and compute Room and Cell location.
			\Else
				\State Decode room-level \texttt{location-reply}
				\State Broadcast \texttt{location-request} with \texttt{no-reply} in next round
			\EndIf
		\EndIf
	\EndIf
\EndLoop 
\end{algorithmic}
\label{alg:wiploc_localization_protocol}
\end{algorithm}

\section{\name++: Implementation and Evaluation}
\label{sec:wiplocplus_implementation_evaluation}

We evaluate the overall performance of WipLoc++ through (i) PRR and localization accuracy (as in the case of \name) and (ii) the recharging period of \anbe~using harvested energy.

\subsection{Hardware Implementation}

To implement \name++ using existing \name~hardware some modifications need to be made. We list them below.

\begin{itemize}
\item{\textbf{Anchor Node:}} This node the nRF51822 is combined with the powercaster. A transistor is added to the power line of the powercaster so that the nRF51822 can control if the powercaster is on or off.
\item{\textbf{Mobile Node:}} There are no modification to the hardware of the mobile node but two extra connections are made between the nRF51822 and the harverster. 
The $D_{\text{out}}$ and the $D_{\text{set}}$ are connected to P0.01 and P0.02 of the nRF51822, respectively.
\item{\textbf{Wirelessly-Powered Anchor Node:}} This node uses the same combination of hardware and connections as the Mobile Node except for one difference. The development board for the nRF51822 is the Smart Beacon Kit and not the PCA10005.
\end{itemize}
All electrical connections of three nodes are given in Fig.~\ref{fig:schematic_complete}.

\subsection{Experiment Setup}

Due to the limited number of harvesters in our laboratory, we only deploy four {\anbe}s around one \charbe. 
The cell-level localization experiments are performed in a room and corridor separately. 
The deployment of {\anbe}s, {\charbe}s in the rooms and the testing positions of {\mono} are illustrated in Fig.~\ref{fig:experiment_setup}. 
The {\mono} sends 20 localization requests with period 1.0\,s at each testing position. As the length of the corridor is larger than the effective charging range from {\charbe} to {\anbe} we deploy two powercasters back-to-back in the middle of the corridor pointing to the begin and the end of the corridor, respectively.

\subsection{Experiment Results}

\begin{table}
	\centering
	\caption{Experiment results of {\name}++ in the room and the corridor}
	\label{tab:experiment_cell_level}
	\begin{tabular}{ c | c | c  p{\columnwidth}}
		 & PRR (\%) & Accuracy (\%)\\
		\hline
		\hline
		Room & 97.5 & 59.9 \\
		Corridor & 100 & 82.2 \\
	\end{tabular}
\end{table}

\begin{figure}
 \label{pic:cell_result_office_corridor}
 \centering	
 \subfigure[t][Localization accuracy: room]{
 	\includegraphics[width=0.5\columnwidth]{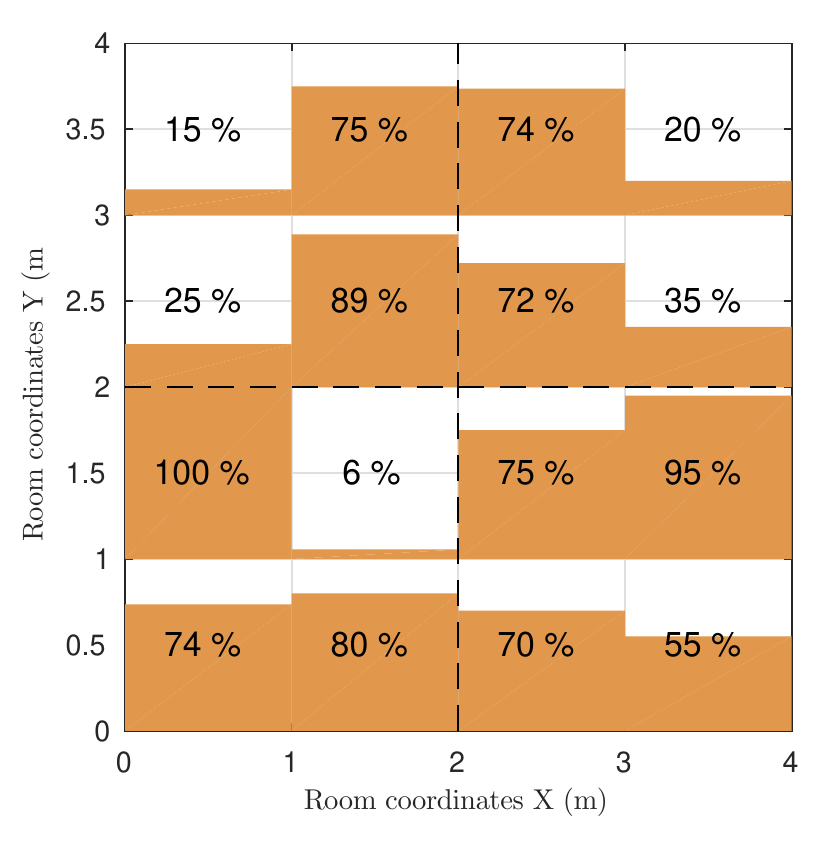}
 	\label{pic:cell_result_office}
 }
 \hspace{0.4cm}
 \subfigure[t][Localization accuracy: corridor]{
 	\includegraphics[height=4.5cm]{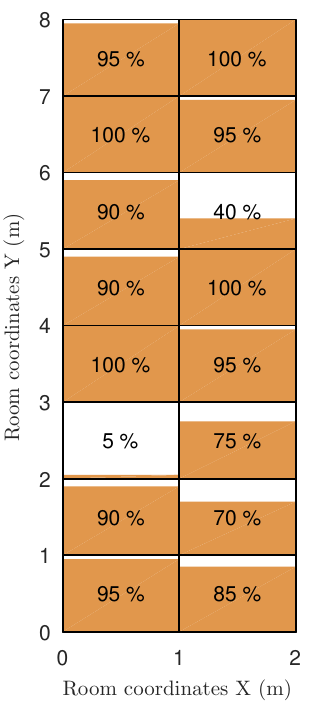}
  \label{pic:cell_result_corridor}
 }
 \caption{\name++ localization results. The values in orange rectangles of two figures represent the cell-level localization accuracy of {\name}++ measured per testing positions of (a) room two and (b) corridor as shown in Fig.~\ref{fig:experiment_setup}.}
\end{figure}

Compared with the \name~room-level results, (see Table~\ref{tab:result1}), the average cell-level PRR and localization accuracy of {\name}++ in the room and corridor (see Table~\ref{tab:experiment_cell_level}) are on the same level. 
Although the cell-level localization accuracy in the room is lower than the room-level accuracy, the localization cell is only 4\,$\text{m}^\text{2}$ using {\name}++, which is much smaller than the 16\,$\text{m}^\text{2}$ localization room using {\name}.
The test results prove that {\name}++ is able to achieve cell-level localization using WPT in all deployed nodes except the {\charbe}.

On the other hand, we find that cell-level localization accuracy at some positions in Fig.~\ref{pic:cell_result_office_corridor} is much lower than the average value for the whole area. 
This is due to two reasons: (i) the radio pattern of powercaster has only 60$^\circ$ coverage in width and height, therefore some testing positions at the border of the room are not effectively covered and (ii) the threshold value $\theta$ uses the measured RSS value at the middle position of the charging area. 
However, the contour line of $\theta$ is not straight in the radio pattern of powercaster, which makes it difficult to categorize cells into strict squares.

\section{Conclusion}
\label{sec:conclusion}

In this paper we presented an RF WPT-enabled indoor localization system denoted as \textbf{\name} (\textbf{Wi}reless \textbf{P}owered \textbf{Loc}alization system). 
The key innovations of WipLoc include: 
(i) leveraging collisions and orthogonal codes to build an extremely low power localization approach, 
and (ii) constructing a cell-level localization network by managing the limited harvested energy from RF-based WPT systems.  
Based on extensive indoor experiments, we showed that \name~is capable of providing continuous cell-level localization to mobile nodes.
To the best of our knowledge, \name~is the first localization system powered by RF transmission.  

\bibliographystyle{IEEEtran}
\bibliography{IEEEabrv,wireless_power_localization}

\end{document}